\newcommand{\bbN}{\mathbb{N}}
\newcommand{\F}{\mathbb{F}}
\newcommand{\Sym} {{\rm S}}
\newcommand{\Mat} {{\rm Mat}}
\newtheorem*{thm*}{Theorem}
\newtheorem{thm}{Theorem}[section]
\newtheorem{lem}[thm]{Lemma}
\newtheorem{cor}[thm]{Corollary}
\newtheorem{rmk}[thm]{Remark}
\title{Group codes over fields are asymptotically good}
\author{Martino Borello}
\thanks{M. Borello is with Universit\'e Paris 8, Laboratoire de G\'eom\'etrie, Analyse et Applications,  LAGA, Universit\'e Sorbonne Paris Nord, CNRS, UMR 7539, F-93526, Saint-Denis, France}
\author{Wolfgang Willems}
\thanks{W. Willems is with Otto-von-Guericke Universit\"at, Magdeburg, Germany,
 and Universidad del Norte, Barranquilla, Colombia}
\date{}
\begin{document}

\begin{abstract} Group codes are right or left ideals in a group algebra of a finite group over a finite field.
Following ideas of Bazzi and Mitter on group codes over the binary
field \cite{BM}, we prove that group codes over finite fields of any
characteristic are asymptotically good.
  \end{abstract}

\maketitle

\noindent
{\bf Keywords.} Group algebra, group code, asymptotically good \\
{\bf MSC classification.} 94A17, 94B05, 20C05

\section{Introduction}

Let $\F$ be a finite field of characteristic $p$ and let $G$ be a
finite group. By a {\it group code} or, more precisely, a {\it
$G$-code} we denote a right or left ideal in the group algebra $\F
G$. Many interesting linear codes are group codes. For example,
cyclic codes of length $n$ are group codes for a cyclic group $C_n$;
Reed-Muller codes  are group codes for an elementary abelian
$p$-group \cite{Berman, Charpin}; the binary extended self-dual
$[24,12,8]$ Golay code is a group code for the symmetric group
$\Sym_4$ on $4$ letters \cite{Bernhardt} and the dihedral group
$D_{24}$ of order $24$ \cite{McLH}. Many best known codes are group
codes as well. For instance, $\F_5(C_6 \times C_6)$ contains a
$[32,28,6]$  and
$\F_5(C_{12} \times C_6)$ a $[72,62,6]$ group code \cite{JLLX}. Both codes improved  earlier examples in Grassl's list \cite{G}.\\

Already in 1965, Assmus, Mattson and Tyrun \cite{AMT} asked the
question whether the class of cyclic codes, i.e., the class of group
codes over cyclic groups, is asymptotically good. The answer is
still open. In \cite{BM}, Bazzi and Mitter proved that the class of
group codes over the binary field is asymptotically good. Using the
trivial fact that by field extensions neither the dimension nor the
minimum distance changes, group codes are asymptotically good in
characteristic $2$. In this note we use the ideas of Bazzi and
Mitter to prove our main result. \\

\noindent
{\bf Theorem.} Group codes over fields are asymptotically good in any characteristic. \\[1ex]

The proof mainly follows the lines of \cite{BM} and does not
distinguish between the prime $p=2$ and $p$ odd for the
characteristic of
the underlying field.\\

For different primes $p\not= q$ let $s_p(q)$ denote the order of $p$
modulo $q$. In order to construct a sequence of particular binary
group algebras over dihedral groups, in \cite{BM} the authors need a
set of primes $q$ with $2 \mid s_2(q)$ which has positive density in
the set of all primes. Such a set is obviously given by all primes
$q \equiv \pm 5 \bmod 8$. For odd primes $p$ the analog is far less
obvious, but has already been proved by Wiertelak in 1977 (see
\cite{Moree}). In the following unified proof (i.e., $p$ any prime)
we heavily use results from modular representation theory.

\section{The structure of the group algebra {$\F_p G_{p,q,m}$}} \label{algebra}

Let $p$ be a fixed prime and let $q$ be a prime such that $p$
divides $q-1$ (there are infinitely many such $q$, by Dirichlet's
Theorem). For $m \in \bbN$ such that $m \not\equiv 1 \bmod q$ and
$m^p\equiv 1\bmod q$, we define the group $G_{p,q,m}$ by
\begin{equation}\label{eq:grp}
G_{p,q,m}:=\langle \alpha,\beta \ | \ \alpha^p=\beta^q=1,
\alpha\beta\alpha^{-1}=\beta^m\rangle=\langle\beta\rangle\rtimes
\langle\alpha\rangle.
\end{equation}
Note that $G_{p,q,m}$ is a nonabelian metacyclic group. In the case
$p=2$ and $m= q-1$ the group $G_{2,q,q-1}$ is a dihedral group
which has been considered in \cite{BM} to prove the Theorem over the binary field $\F_2$.\\

Next we put $N:=\langle\beta\rangle$ and $Q:=\F_pN$. Any element $r$
of $\F_pG_{p,q,m}$ can uniquely be written  as
$$r=r_0+\alpha r_1+\cdots+\alpha^{p-1}r_{p-1}$$
with $r_0,\ldots,r_{p-1} \in Q$. If $a=\sum_{i=0}^{q-1} a_i \beta^i$
(with $a_i\in\F_p$) is an element of $Q$, we  define $\hat{a}$ by
$$\hat{a}:=\sum_{i=0}^{q-1} a_i \beta^{i\cdot m}$$
Clearly, the map  $\hat{}:Q\to Q$ is an $\F_p$-algebra automorphism.
From the relation $\alpha\beta=\beta^m\alpha$ we get
 $\alpha\beta^i=\beta^{i\cdot m}\alpha$ for all
$i\in \{0,\ldots,q-1\}$, so that
$$\alpha a=\hat{a}\alpha$$
for all $a\in Q$.

Now we realize  $Q$ as $\F_p[x]/\langle x^q-1\rangle$.
Since $Q$ is a semisimple algebra by Maschke's Theorem (\cite{AB}, p. 116), we have,  due to Wedderburn's Theorem
(\cite{AB}, Chap. 5, Sect. 13, Theorem 16), a unique decomposition
$$Q=\bigoplus_{i=0}^sQ_i$$
into $2$-sided ideals $Q_i$, where each
$Q_i$ is a simple algebra over $\F_p$. If $$x^q-1=\prod_{i=0}^s
f_i$$ is a factorization of $x^q-1$ into irreducible polynomials $f_i \in
\F_p[x]$, then
$$Q_i=\left\langle \frac{x^q-1}{f_i} \right\rangle\cong \F_p[x]/\langle f_i\rangle\cong \F_{p^{\deg f_i}}.$$
We may  suppose that $f_0=x-1$, so that $Q_0=\langle 1+\ldots+x^{q-1}\rangle\cong \F_p$.\\
Now let $\zeta_q$ be a primitive $q$-th root of unity in an extension field of
$\F_p$. It is well-known by basic Galois theory that, for every
$i\in \{1,\ldots,s\}$, there exists exactly one coset $A_i$ in
$\F_q^\times /\langle p\rangle$ such that
$$f_i=\prod_{a\in A_i} (x-\zeta_q^a)$$
and the map $f_i\mapsto A_i$ is one-to-one. Furthermore, $\deg f_i
=s_p(q)$, which is the multiplicative order of $p$ in $\F_q^\times$.
In particular,
$$\dim Q_i := l_i = s_p(q) $$ for $i\in\{1,\ldots,s\}$.
The automorphism \ $\hat{}$ \  maps each $Q_i$ to some $Q_j$. More
precisely, $\hat{Q_i}$ corresponds to the coset $mA_i$. In
particular, $\hat{Q_i}=Q_i$ iff $mA_i=A_i$. \\ 

In what follows we need to understand  which conditions on $q$ and $m$ imply
$\hat{Q}_i=Q_i$ for all $i\in \{1,\ldots,s\}$. Note that obviously
$\hat{Q}_0 = Q_0$.

\begin{lem} \label{T1} The following
conditions are equivalent.
\begin{enumerate}
    \item $\hat{Q_i}=Q_i$ for all $i\in \{0,1,\ldots,s\}$.
    \item There exists $i\in \{1,\ldots,s\}$ such that $\hat{Q_i}=Q_i$.
    \item $m \in \langle p\rangle \leq \F_q^\times$.
\end{enumerate}
\end{lem}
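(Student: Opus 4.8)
The plan is to piggyback on the analysis of the automorphism $\hat{}$ carried out immediately before the statement, and to prove the equivalence as a cycle $(3)\Rightarrow(1)\Rightarrow(2)\Rightarrow(3)$. Recall from that discussion that for $i\in\{1,\ldots,s\}$ the simple component $\hat{Q_i}$ is the one attached to the coset $mA_i\in\F_q^\times/\langle p\rangle$, so that $\hat{Q_i}=Q_i$ holds exactly when $mA_i=A_i$; and that $\hat{Q_0}=Q_0$ always, since $q\nmid m$ makes $\hat{}$ permute the basis $1,x,\ldots,x^{q-1}$ of $Q$ and hence fix $Q_0=\langle 1+x+\cdots+x^{q-1}\rangle$.

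The decisive remark I would record first is purely group-theoretic: writing $A_i=a_i\langle p\rangle$ and using that $\F_q^\times$ is abelian, the equality $mA_i=A_i$ is equivalent to $ma_i\langle p\rangle=a_i\langle p\rangle$, hence to $m\in\langle p\rangle$. In particular this condition does not depend on $i$: the assertion ``$\hat{Q_i}=Q_i$'' has the same truth value for every $i\in\{1,\ldots,s\}$, and it holds precisely when $m\in\langle p\rangle\leq\F_q^\times$.

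Given this, the three implications are short. For $(3)\Rightarrow(1)$: if $m\in\langle p\rangle$ then $mA_i=A_i$ for all $i\in\{1,\ldots,s\}$, so $\hat{Q_i}=Q_i$ there, and together with $\hat{Q_0}=Q_0$ we obtain $(1)$. The implication $(1)\Rightarrow(2)$ needs only that $\{1,\ldots,s\}$ be nonempty, which I would check by noting that $p\mid q-1$ forces $q>p\geq 2$, so $x^q-1$ has at least one irreducible factor besides $x-1$ and hence $s\geq 1$. Finally $(2)\Rightarrow(3)$: if $\hat{Q_i}=Q_i$ for some $i\in\{1,\ldots,s\}$ then $mA_i=A_i$, whence $m\in\langle p\rangle$ by the remark above.

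There is essentially no obstacle: the substantive content — identifying $\hat{Q_i}$ with the component of $mA_i$ — has already been settled in the preparation preceding the lemma, and what remains is the observation that this identification collapses the (a priori $i$-dependent) stability conditions to the single membership statement $m\in\langle p\rangle$. The only point deserving a line of care is confirming that $\{1,\ldots,s\}$ is nonempty so that condition $(2)$ is not vacuous, which is why I would include the quick remark that $q\geq 3$.
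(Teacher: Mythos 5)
Your proof is correct and follows essentially the same route as the paper's: both rely on the identification $\hat{Q_i}\leftrightarrow mA_i$ from the preceding discussion and reduce the stability condition $mA_i=A_i$ to $m\in\langle p\rangle$, closing the cycle of implications. Your extra remark that $s\geq 1$ (so that $(2)$ is not vacuous) is a small point the paper leaves implicit; it is a welcome, if minor, addition.
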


\begin{proof}
Clearly $(1)$ implies $(2)$. By the  discussion above,
$\hat{Q_i}=Q_i$ for some $i \geq 1$ iff $mA_i=A_i$, which happens
iff $m \in \langle p\rangle \leq \F_q^\times$. So $(2)$ implies
$(3)$. Obviously $(1)$ follows from $(3)$.
\end{proof}

Let $s_p(q)$ denote the order of $p$ modulo $q$ and suppose that $
p\mid s_p(q)$. Thus $s_p(q)= pu$ for some $u\in \bbN$. We may take
$m:=p^u$ in the definition of $G_{p,q,m}$, since $m \not\equiv 1
\bmod q$ and $m^p\equiv 1\bmod q$.
In this case we have $\hat{Q_i} =Q_i$ for $i\in\{0,1,\ldots,s\}$, by Lemma \ref{T1}. \\

Now let
$$ {\mathcal P} :=\{q\mid q \ \text{a prime}, \ p \mid s_p(q) \}.$$
The set ${\mathcal P}$ of primes is
infinite and it has positive density (see for instance \cite{Moree}).\\

\textbf{From now on, we assume that $q \in \mathcal{P}$.}

\bigskip

Let $G:=G_{p,q,p^{s_p(q)/p}}$ and recall that $Q = \F_pN = Q_0
\oplus \ldots \oplus Q_s$ with $Q_0 =
(\sum_{i=0}^{q-1}\beta^i)\F_p$. If we put
$$ R_i= Q_i \oplus \alpha Q_i \oplus \ldots \oplus \alpha^{p-1} Q_i$$
for $i\in \{0,\ldots,s\}$, then obviously
$$ \F_pG = R_0 \oplus \ldots \oplus R_s.$$

\begin{thm} \label{structure1} The structure of $R_i$ is as follows.
\begin{itemize}
\item[a)] All $R_i$ are $2$-sided ideals of $\F_pG$.
\item[b)] As a left $\F_pG$-module we have $R_0 \cong \F_pG/N$. In particular, $R_0$ is uniserial of dimension $p$ and
all composition factors are isomorphic to the trivial
$\F_pG$-module.
\item[c)] For $i>0$ all minimal left ideals in $R_i$ are projective $\F_pG$-modules. Thus $R_i$ is a completely reducible
left $\F_pG$-module for $i>0$.
\item[d)] $R_i$ is indecomposable as a $2$-sided ideal, hence a {\rm $p$-block of $\F_pG$}. In particular,
 $R_i$ contains up to isomorphism exactly one irreducible left $\F_pG$-module
which is of dimension $l_i = s_p(q)$.
\item[e)] $R_i \cong \Mat_p(\F_{{p}^{{l_i}/{p}}})$ for $i>0$ and $R_i$ contains up to isomorphism exactly one irreducible
left $\F_pG$-module, say $M_i$, of dimension $l_i =s_p(q)$.
\end{itemize}
\end{thm}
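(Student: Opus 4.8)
The plan is to exploit that, under the standing hypotheses $q\in\mathcal P$ and $m=p^{s_p(q)/p}$, Lemma~\ref{T1} gives $\hat Q_i=Q_i$ for every $i$; this makes the block idempotents of $Q=\F_pN$ central in the whole of $\F_pG$ and reduces the theorem to the structure of a single skew group ring over a finite field. For \textbf{part a)}, let $\epsilon_i\in Q$ be the identity of the simple algebra $Q_i$, a primitive central idempotent of the commutative ring $Q$ with $Q\epsilon_i=Q_i$. Since conjugation by $\alpha$ acts on $Q$ exactly as the automorphism $a\mapsto\hat a$ (because $\alpha a=\hat a\alpha$), and by Lemma~\ref{T1} this automorphism stabilises each $Q_i$, it fixes the identity $\epsilon_i$ of $Q_i$; as $Q=\F_pN$ is commutative this already forces $\epsilon_i\in Z(\F_pG)$. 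I would then conclude $R_i=\bigoplus_j\alpha^jQ_i=\bigoplus_j\alpha^jQ\epsilon_i=\F_pG\,\epsilon_i$, a two‑sided ideal which is a ring with identity $\epsilon_i$, and $\F_pG=\bigoplus_{i=0}^sR_i$ as rings.

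For \textbf{part b)}, write $e_0=\sum_{j=0}^{q-1}\beta^j$, so $Q_0=\F_pe_0$ and $\beta e_0=e_0$. Hence $N$ acts trivially on $R_0$, so $R_0$ is a module over $\F_p[G/N]$, and as $\alpha$ cyclically permutes the $\F_p$‑basis $e_0,\alpha e_0,\dots,\alpha^{p-1}e_0$ of $R_0$ I would identify $R_0\cong\F_pG/N$, the inflation to $G$ of the regular $\F_p[G/N]$‑module. Since $G/N\cong C_p$ and $\operatorname{char}\F_p=p$, the algebra $\F_p[G/N]\cong\F_p[y]/\langle(y-1)^p\rangle$ is local and uniserial of $\F_p$‑dimension $p$ with every composition factor the trivial module, which is exactly the assertion.

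For \textbf{parts c), d), e)} fix $i\ge1$. By part a), $R_i=\bigoplus_{j=0}^{p-1}\alpha^jQ_i$ with $\alpha^p=1$ and $\alpha a=\hat a\alpha$, so $R_i$ is the skew group ring of $C_p=\langle\alpha\rangle$ acting on the field $Q_i\cong\F_{p^{l_i}}$ via the automorphism $\sigma\colon a\mapsto\hat a$. I would next pin down the order of $\sigma$: under $Q_i\cong\F_p[x]/\langle f_i\rangle$ it sends a root $\xi$ of $f_i$ to $\hat\xi=\xi^m=\xi^{p^{l_i/p}}$ (using $m=p^{s_p(q)/p}$ and $l_i=s_p(q)$ for $i\ge1$), and since $\xi$ generates $\F_{p^{l_i}}$ over $\F_p$, the map $\sigma$ coincides with the $(l_i/p)$‑th power of the Frobenius $y\mapsto y^p$; hence $\sigma$ has order exactly $l_i/\gcd(l_i/p,l_i)=p$ and fixed field $\F_{p^{l_i/p}}$. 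Thus $Q_i/\F_{p^{l_i/p}}$ is a cyclic Galois extension of degree $p$ with group $\langle\sigma\rangle$, and I would invoke the classical isomorphism $L\rtimes\mathrm{Gal}(L/K)\cong\mathrm{End}_K(L)\cong\Mat_{[L:K]}(K)$ — the ring map built from multiplications and Galois automorphisms, injective by Dedekind's independence of characters and bijective by comparing $K$‑dimensions — to obtain
$$R_i\;\cong\;\Mat_p\!\big(\F_{p^{l_i/p}}\big),$$
which is e). Being a matrix algebra over a field, $R_i$ is simple Artinian, so $\epsilon_i$ is a primitive idempotent of $Z(\F_pG)$, $R_i$ is a $p$‑block, and $R_i$ has a unique simple module $M_i\cong\F_{p^{l_i/p}}^{\,p}$ of $\F_p$‑dimension $p\cdot(l_i/p)=l_i=s_p(q)$; this gives d) and the module part of e). Finally, since $R_i$ is semisimple, every minimal left ideal of $R_i$ is a direct summand of $R_i$, hence of the left $\F_pG$‑module $\F_pG$ (of which $R_i$ is a summand), hence a projective $\F_pG$‑module, and $R_i$ is a direct sum of such, i.e.\ a completely reducible $\F_pG$‑module, which is c).

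\textbf{Main obstacle.} The delicate point is the order computation for $\sigma$: it is precisely here that $q\in\mathcal P$ (so $p\mid s_p(q)$) together with $m=p^{s_p(q)/p}\not\equiv1\bmod q$ is essential, forcing $\sigma$ to act with order \emph{exactly} $p$ rather than trivially — without this $R_i$ would be the semisimple but decomposable ring $Q_i\otimes_{\F_p}\F_pC_p$ and parts c)--e) would fail. The remaining ingredients ($R_i=\F_pG\epsilon_i$, the skew‑group‑ring description, the isomorphism $L\rtimes\mathrm{Gal}(L/K)\cong\mathrm{End}_K(L)$, and the module‑theoretic consequences of $R_i\cong\Mat_p(\text{field})$) are standard, and I expect them to go through routinely.
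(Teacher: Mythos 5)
Your proof is correct, and for the core parts c)--e) it takes a genuinely different route from the paper. The paper proves c) first, with fairly heavy modular representation theory: it passes to a splitting field $\bar{\F}_p$ of $N$, uses Clifford's theorem to see that every nontrivial irreducible $\bar{\F}_pG$-module is induced from $N$ and hence projective, and then descends to $\F_p$ via Galois conjugates and the fact that direct summands of projectives are projective. It then proves d) by noting that $Q_i$ extends uniquely (up to isomorphism) to $G$ because $G/N$ is a $p$-group, so $R_i$ carries a unique simple module and cannot split as a sum of two-sided ideals; e) is finally deduced from c), d) and Wedderburn. You reverse this order: you identify $R_i$ ($i\geq 1$) as the skew group ring of $\langle\sigma\rangle$ acting on the field $Q_i\cong\F_{p^{l_i}}$, where the key computation is that $\hat{\ }$ restricted to $Q_i$ sends the image $\xi$ of $\beta$ to $\xi^{m}=\xi^{p^{l_i/p}}$ and hence equals $\mathrm{Frob}^{l_i/p}$, of order exactly $p$ with fixed field $\F_{p^{l_i/p}}$; the classical isomorphism $L\rtimes\mathrm{Gal}(L/K)\cong\mathrm{End}_K(L)\cong\Mat_{[L:K]}(K)$ then yields e) directly, and c), d) fall out as formal consequences of $R_i$ being simple Artinian. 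This is more elementary and self-contained (no Clifford theory, no splitting-field descent, no block-theoretic counting), and your order computation in fact subsumes the content of the paper's subsequent Lemma \ref{Z_i}. One small inaccuracy in your closing remark, which does not affect the proof: if $\sigma$ were trivial, $Q_i\otimes_{\F_p}\F_pC_p\cong Q_i[y]/\langle (y-1)^p\rangle$ would be local and non-semisimple, not ``semisimple but decomposable''.
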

\begin{proof} a) Clearly, $R_i$ is a left ideal. It is also a right ideal since $Q_i= \hat{Q}_i$ by
Lemma \ref{T1}, and $\alpha a = \hat{a} \alpha$ for $a \in Q$. \\
b) This follows immediately from representation theory (see for instance (\cite{HB}, Chap. VII, Example 14.10)). \\
c) Let $\bar{\F}_p \supseteq \F_p$ be a finite splitting field for
$N$ (\cite{HB}, Chap. VII, Theorem 2.6). Thus every irreducible
character $\chi$ of $\bar{\F}_pN$ is of degree $1$. If $\chi$ is not
the trivial character, then, according to the action of $\alpha$ on
$\beta$,  the induced character $\chi^G$ is an irreducible character
for $G$, by Clifford's Theorem. Furthermore $\chi^G$ is afforded by
an irreducible projective $\bar{\F}_pG$-module (\cite{HB}, Chap.
VII, Theorem 7.17). Thus all non-trivial irreducible
$\bar{\F}_pG$-modules are projective. Now, let $M$ be an irreducible
non-trivial $\bar{\F}_pG$-module and denote by $M_0$ the space $M$
regarded as an $\F_pG$ module. Then, by (\cite{HB}, Chap. VII,
Theorem 1.16 a)), $M_0 \otimes_{\F_p} \bar{\F}_p$  is a direct sum
of Galois conjugates of $M$,
 which are all projective since no one is the trivial module. Finally, by (\cite{HB}, Chap. VII, Ex. 19 in Sec. 7),
the module $M_0$ is a projective $\F_pG$-module, and by (\cite{HB}, Chap. VII, Theorem 1.16 d)),
$M_0 \cong W \oplus \ldots \oplus W$ for some irreducible $\F_pG$-module $W$.
 Thus $W$ is projective. Since obviously all irreducible non-trivial $\F_pG$-modules
can be described this way we are done.\\
d) Note that $R_i$ is not irreducible as a left module since $M_i:=Q_i(1+ \alpha + \ldots + \alpha^{p-1})$ is a minimal ideal in $R_i$.
Clearly, $Q_i \cong M_i$ as a left $\F_pN$-module.
Thus $Q_i$ has an extension  to the irreducible $\F_pG$-module $M_i$.
But all extensions are isomorphic since $G/N$ is a $p$-group. Thus $R_i$ has up to isomorphism exactly one irreducible $F_pG$-module and $\F_pG$
 has exactly $s+1$ non-isomorphic $\F_pG$-modules.
 If some $R_i$ is a direct sum of two non-zero $2$-sided
ideals, then $R_i$ contains at least two non-isomorphic irreducible
$\F_pG$-modules, a contradiction. \\
e) By c) and d), we know that $R_i$ contains up to isomorphism
exactly one irreducible left $\F_pG$-module, say $M_i$, which has
dimension $l_i$.  Thus $R_i \cong M_i \oplus \ldots
\oplus M_i$ with $p$ components $M_i$.
That $R_i$ has the indicated matrix algebra structure now follows by
Wedderburn's Theorem.
\end{proof}

\begin{lem} \label{Z_i} For $i>0$ we have
\begin{itemize}
\item[a)] $Z_i:=\{a\in Q_i \mid a =\hat{a} \}$ is a subfield of $Q_i$.
\item[b)] $\#Z_i = p^\frac{l_i}{p} = p^\frac{s_p(q)}{p}$.
\end{itemize}
\end{lem}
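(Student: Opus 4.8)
The plan is to exploit the explicit description of $Q_i$ as the field $\F_p[x]/\langle f_i\rangle \cong \F_{p^{l_i}}$ together with the fact that the automorphism $\hat{}$ restricts to a field automorphism of $Q_i$ (we already know $\hat{Q_i}=Q_i$ from Lemma \ref{T1}). Part a) is then immediate: the set $Z_i$ of fixed points of any field automorphism of a finite field is a subfield, so the only real content is to compute its size, which is part b). For that, I would identify the order of $\hat{}\,|_{Q_i}$ as an automorphism of $Q_i$.

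The key observation is that $\hat{}$ is induced on $Q_i \cong \F_p[x]/\langle f_i\rangle$ by the substitution $x \mapsto x^m$, equivalently, on the roots, by $\zeta_q^a \mapsto \zeta_q^{ma}$ for $a \in A_i$. Since $Q_i$ has degree $l_i = s_p(q)$ over $\F_p$, its automorphism group is cyclic of order $l_i$, generated by the Frobenius $\varphi: y \mapsto y^p$. Writing $m = p^{u}$ with $u = s_p(q)/p = l_i/p$ (this is exactly our choice $G = G_{p,q,p^{s_p(q)/p}}$), we see that $\hat{}\,|_{Q_i} = \varphi^{u}$. The order of $\varphi^{u}$ in the cyclic group $\langle \varphi\rangle$ of order $l_i = pu$ is $l_i/\gcd(u,l_i) = pu/u = p$. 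Hence $\hat{}\,|_{Q_i}$ has order exactly $p$ (here one checks $u < l_i$, so it is not the identity; indeed $m \not\equiv 1 \bmod q$ guarantees $\hat{}$ is nontrivial on each $Q_i$ with $i>0$).

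Therefore $Z_i$ is the fixed field of an automorphism of order $p$ of $\F_{p^{l_i}}$, so by the Galois correspondence for finite fields it is the unique subfield of index $p$, namely $\F_{p^{l_i/p}}$, giving $\#Z_i = p^{l_i/p} = p^{s_p(q)/p}$ as claimed.

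The only point requiring a little care — and the main (minor) obstacle — is verifying that $\hat{}\,|_{Q_i}$ really is the power $\varphi^{u}$ of Frobenius and not merely some automorphism whose order we must pin down by other means. This is where the specific choice $m = p^{u}$ is essential: on $Q_i$, raising to the $m$-th power coincides with applying Frobenius $u$ times, because the residue field $Q_i$ has characteristic $p$ and $m$ is a power of $p$. Once that identification is in place, the rest is the standard structure theory of finite fields. (Alternatively, one can argue more directly: $a = \hat a$ in $Q_i$ says $a = a^{p^{u}}$, i.e.\ $a$ lies in the subfield of $Q_i$ of cardinality $p^{u}$; since $p^u \mid p^{l_i}$ exactly when $u \mid l_i$, which holds, this subfield exists and has the stated size.)
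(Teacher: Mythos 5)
Your proof is correct, but it takes a genuinely different route from the paper's. The paper never identifies $\hat{}\;|_{Q_i}$ with a power of Frobenius; instead it counts fixed points representation-theoretically: since $\alpha$ acts fixed-point-freely on $N\setminus\{1\}$, the fixed space of $\hat{}$ on $Q^\ast=\bigoplus_{i\geq 1}Q_i$ has dimension $(q-1)/p$, and then, after extending scalars to a splitting field and decomposing $Q_i\otimes_{\F_p}\bar{\F}_p$ into the one-dimensional character spaces $V_1,\ldots,V_{l_i}$, the regular action of $\alpha$ on $\{V_1,\ldots,V_{l_i}\}$ shows each block contributes a fixed space of dimension $l_i/p$, whence $\dim Z_i=l_i/p$. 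Your argument is more direct and arguably cleaner: because $m=p^u$ with $u=s_p(q)/p$ and the coefficients $a_j$ lie in $\F_p$, one has $\hat a=\sum a_j\beta^{jp^u}=\bigl(\sum a_j\beta^j\bigr)^{p^u}=a^{p^u}$ on all of $Q$, so $\hat{}\;|_{Q_i}$ is literally $\varphi^u$ on the field $Q_i\cong\F_{p^{l_i}}$, and $Z_i$ is its fixed field $\F_{p^{\gcd(u,l_i)}}=\F_{p^{l_i/p}}$ since $l_i=pu$. This makes transparent exactly where the choice $m=p^{s_p(q)/p}$ enters, whereas the paper's computation only uses that $\langle m\rangle$ has order $p$ and acts without fixed points on the nontrivial characters. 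The one point worth writing out carefully is the identity $\hat a=a^{p^u}$ itself (your closing parenthetical does this); once it is on the page, both a) and b) are immediate from the Galois theory of finite fields, and the order computation even gives nontriviality of $\hat{}\;|_{Q_i}$ for free.
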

\begin{proof} a) This is obviously true. \\
b) Since $\alpha$ acts fixed point freely on $N\setminus\{1\}$ we
get $\dim \{a \in Q^\ast \mid \hat{a} = a\} = \frac{q-1}{p}$. Now,
it is sufficient to show that $\dim Z_1 = \dim Z_j$ for $j\geq 1$,
which implies
$$ \dim Z_i = \frac{q-1}{sp} = \frac{s_p(q)}{p} = \frac{l_i}{p}.$$
Let $\bar{\F}_p$ be a splitting field for $G$. To prove that $\dim
Z_1 = \dim Z_j$ for $j \geq 1$ first note that $ Q_i \otimes_{\F_p}
\bar{\F}_p = V_1 \oplus \ldots \oplus V_{l_i}$, where $V_j =
(\frac{1}{|N|} \sum_{x \in N} \chi_j(x^{-1})x)\bar{\F}_p$ and
$\chi_j$ is a linear non-trivial character of $\bar{\F}_p N$. Thus
$\alpha$ acts regularly on the set $\{V_1, \ldots, V_{l_i}\}$, which
proves that the fixed point space of $\alpha$ on $V_1 \oplus \ldots
\oplus V_{l_i}$ has dimension $\frac{l_i}{p}$. This implies that the
fixed point space on $W_i$ also has dimension $\frac{l_i}{p}$, i.e.
$\#Z_i = p^\frac{l_i}{p}$.
\end{proof}

In order to determine all minimal left ideals in $R_i$ we need the following notation.
For $b \in Q_i^\times$ we denote by $[b]$ the image of $b$ in the factor group $Q_i^\times/Z_i^\times$.

\begin{lem} \label{min-ideals} For $i>0$ we have the following.
\begin{itemize}
\item[a)] For $b \in Q_i^\times$, the space  $Q_i(1 + \alpha + \ldots \alpha^{p-1})b$  is a minimal left ideal in $R_i$.
\item[b)] $Q_i(1 + \alpha + \ldots \alpha^{p-1})b = Q_i(1 + \alpha + \ldots \alpha^{p-1})b'$ iff $[b] = [b']$.
\item[c)] Each minimal left ideal of $R_i$ is of the form $I_{[b]}^i=Q_i(1 + \alpha + \ldots \alpha^{p-1})b$ with $b \in Q_i^\times$.
\end{itemize}
\end{lem}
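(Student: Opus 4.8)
The plan is to exploit that right multiplication by a unit of the field $Q_i$ induces an automorphism of the left $\F_p G$-module $R_i$: this gives (a) at once, reduces (b) to a short computation inside $\F_p G$, and leaves (c) to a cardinality count. Throughout I write $e:=1+\alpha+\dots+\alpha^{p-1}$, let $e_i$ denote the identity of the field $Q_i$, and write $\sigma$ for the algebra automorphism $a\mapsto\hat a$ of $Q$. Two bookkeeping facts are used repeatedly. First, $\sigma(e_i)=e_i$, since $\sigma$ is an automorphism carrying $\hat Q_i=Q_i$ onto itself; hence $e_i$ commutes with $\alpha$, is central in $\F_p G$, and, since $R_i=e_i\F_p G$, is the identity of the block $R_i$. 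Second, $R_ie=Q_ie$, because $\alpha e=e$ (as $\alpha^p=1$) and $\alpha^j a=\sigma^j(a)\alpha^j$, so $\alpha^j a e=\sigma^j(a)e\in Q_ie$ for all $a\in Q_i$. Granting these, for (a): for $b\in Q_i^\times$ the map $\rho_b\colon r\mapsto rb$ is a left $\F_p G$-module endomorphism of $R_i$ (it commutes with the left action, and $R_ib\subseteq R_i$ as $R_i$ is a two-sided ideal), and $\rho_{b^{-1}}$ is its two-sided inverse on $R_i$ since $bb^{-1}=e_i$ acts as the identity there; thus $\rho_b$ is an automorphism of the module $R_i$, so $Q_ieb=\rho_b(Q_ie)=\rho_b(M_i)$ is an $\F_p G$-submodule of $R_i$ isomorphic to the irreducible module $M_i$ of Theorem~\ref{structure1}, i.e.\ a minimal left ideal of $R_i$ (and of $\F_p G$, since $R_i$ is an ideal).

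For (b), put $c:=b(b')^{-1}\in Q_i^\times$; applying the automorphism $\rho_{(b')^{-1}}$ shows $Q_ieb=Q_ieb'$ iff $Q_iec=Q_ie$, so I must prove $Q_iec=Q_ie\iff c\in Z_i^\times$. If $c\in Z_i^\times$ then $\sigma(c)=c$, hence $\alpha^jc=c\alpha^j$ for all $j$, so $ec=ce$ and $Q_iec=Q_ice=Q_ie$. Conversely, if $Q_iec=Q_ie$ then $e_iec\in Q_ie$, say $e_iec=qe$ with $q\in Q_i$; expanding, $e_iec=\sum_{j=0}^{p-1}\alpha^jc$ (using that $e_i$ is central with $e_ic=c$), while $qe=\sum_{j=0}^{p-1}q\alpha^j=\sum_{j=0}^{p-1}\alpha^j\sigma^{-j}(q)$, so comparing components in $\F_p G=Q\oplus\alpha Q\oplus\dots\oplus\alpha^{p-1}Q$ gives $c=\sigma^{-j}(q)$ for all $j$; taking $j=0$ and $j=1$ yields $q=c$ and $\sigma(c)=c$, whence $c\in Z_i\cap Q_i^\times=Z_i^\times$, i.e.\ $[b]=[b']$.

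For (c): by (a) and the first part of (b), $[b]\mapsto I^i_{[b]}:=Q_ieb$ is a well-defined map from $Q_i^\times/Z_i^\times$ to the set of minimal left ideals of $R_i$, and by the second part it is injective. Since $Q_i\cong\F_{p^{l_i}}$ and, by Lemma~\ref{Z_i}, $Z_i$ is its subfield of order $p^{l_i/p}$, the cyclic group $Q_i^\times$ has order $p^{l_i}-1$ with subgroup $Z_i^\times$ of order $p^{l_i/p}-1$, so the domain has $(p^{l_i}-1)/(p^{l_i/p}-1)$ elements. On the other hand $R_i\cong\Mat_p(\F_{p^{l_i/p}})$ by Theorem~\ref{structure1}(e), and the minimal left ideals of a full matrix algebra $\Mat_p(k)$ over a field $k$ correspond bijectively to the lines of $k^p$ (a simple submodule of the left regular module $\Mat_p(k)\cong(k^p)^{\oplus p}$ is the image of a nonzero homomorphism $k^p\to(k^p)^{\oplus p}$, and two such have equal image iff they differ by a scalar, so these images are parametrized by $(k^p\setminus\{0\})/k^\times$), giving $(|k|^p-1)/(|k|-1)$ of them; for $k=\F_{p^{l_i/p}}$ this is again $(p^{l_i}-1)/(p^{l_i/p}-1)$. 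An injection between two finite sets of the same size is a bijection, so every minimal left ideal of $R_i$ equals $Q_ieb$ for some $b\in Q_i^\times$.

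I expect the only delicate point to be the component comparison in (b) — making the bookkeeping inside $\F_p G=\bigoplus_j\alpha^jQ$ precise and tracking the $\sigma^{-j}$ twists — together with the (standard) count of minimal left ideals of a matrix algebra over a finite field; everything else is routine.
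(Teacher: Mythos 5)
Your proof is correct and follows essentially the same route as the paper: the same twisted commutation $\alpha a=\hat a\alpha$ underlies (a), the same component comparison in $\bigoplus_j\alpha^jQ$ forces $\hat c=c$ in (b), and (c) is the identical count of $(p^{l_i}-1)/(p^{l_i/p}-1)$ minimal left ideals of $\Mat_p(\F_{p^{l_i/p}})$. Your packaging of right multiplication by $b$ as a module automorphism $\rho_b$, and the careful use of $e_i$ in place of the paper's slightly informal ``$xy=1$'', are just cleaner bookkeeping, not a different argument.
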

\begin{proof} a) This is clear since $\alpha a = \hat{a} \alpha$ for $a \in Q$ and $\hat{Q_i} = Q_i$. \\
b) Suppose that $0 \not= a(1 + \alpha + \ldots \alpha^{p-1})b = a'(1 + \alpha + \ldots \alpha^{p-1})b'$ with $a,a',b,b' \in Q_i^\times$.
Thus $$x(1 + \alpha + \ldots \alpha^{p-1})y=(1 + \alpha + \ldots \alpha^{p-1})$$ with $x = a'^{-1}a$ and $y =bb'^{-1}$.
Since
$$ x(1 + \alpha + \ldots \alpha^{p-1})y = xy + x\hat{y}\alpha  + \hat{\hat{y}}\alpha^2 + \ldots $$
we obtain $xy =1 = x\hat{y}$, hence $y = \hat{y}$. It follows
$$ y = bb'^{-1} \in Z_i^\times, $$
hence $[b] = [b']$. Conversely, if $[b] = [b']$,
then obviously $Q_i(1 + \alpha + \ldots \alpha^{p-1})b = Q_i(1 + \alpha + \ldots \alpha^{p-1})b'$. \\
c) Since $\# Z_i = p^{\frac{l_i}{p}}$ by Lemma \ref{Z_i}, we have constructed so far exactly  $\frac{p^{l_i}-1}{p^{l_i/p}-1}$
minimal left ideals. According to Lemma \ref{structure1} e)  we have $R_i \cong \Mat_p(\F_{{p}^{{l_i}/{p}}})$. It is well-known that
there is a bijection between the set of minimal left ideals in $\Mat_p(\F_{{p}^{{l_i}/{p}}})$ and the set of 1-dimensional subspaces
in a $p$-dimensional vector space over $\F_{{p}^{{l_i}/{p}}}$, which has cardinality $\frac{p^{l_i}-1}{p^{l_i/p}-1}$.
\end{proof}

\section{Asymptotically good group codes} \label{asymp}

In this section we prove that group codes are asymptotically good in
any characteristic. We set here $G:=G_{p,q,p^{s_p(q)/p}}$ and we
consider the group algebra $\F_pG$. All the notations are as in
Section \ref{algebra}.

\begin{lem}[Chepyzhov \cite{Ch}]\label{lem:asymp}  Let $r:\bbN \longrightarrow \bbN$ denote a non-decreasing function and let
$$P(r)=\{t \text{ prime} \ | \ s_p(t)\geq r(t)\}.$$ If $r(t)<<\sqrt{\gamma\cdot t/\log_p t}$, with $\gamma=\log_p(e)\cdot\log_p(2)$,
then $P(r)$ is infinite and dense in the set of all primes. In
particular, if $\log_p(t)<<r(t)<<\sqrt{\gamma\cdot t/\log_p t}$,
then the set of primes $t$ such that $s_p(t)$ grows faster than
$\log_p(t)$ is infinite and dense in the set of all primes.
\end{lem}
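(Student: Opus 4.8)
The plan is to prove the (equivalent) statement that the set of \emph{bad} primes
\[
B:=\{t \text{ prime}\mid t\neq p,\ s_p(t)<r(t)\}
\]
has relative density $0$ among all primes; then $P(r)$ has relative density $1$, so in particular it is infinite and dense. The only external input is the Prime Number Theorem (Chebyshev's lower bound $\pi(x)\gg x/\log_p x$ would already suffice), together with the two elementary observations that a prime $t$ with $s_p(t)=d$ divides $p^d-1$, and that $p^d-1$ has few prime divisors.

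First I would estimate the counting function of $B$. Fix $x$. If $t\le x$ lies in $B$ then, since $r$ is non-decreasing, $1\le s_p(t)<r(t)\le r(x)$, and with $d:=s_p(t)$ we have $t\mid p^d-1$. Sorting the bad primes according to the value of $s_p$,
\[
\#\{t\in B\mid t\le x\}\ \le\ \sum_{1\le d<r(x)}\#\{t \text{ prime}\mid s_p(t)=d\}\ \le\ \sum_{1\le d<r(x)}\omega\bigl(p^d-1\bigr),
\]
where $\omega(n)$ is the number of distinct prime divisors of $n$. The trivial bound $\omega(n)\le\log_2 n$ (a product of $\omega(n)$ distinct primes is $\ge 2^{\omega(n)}$) gives $\omega(p^d-1)\le\log_2(p^d)=d/\log_p 2$, hence
\[
\#\{t\in B\mid t\le x\}\ \le\ \frac{1}{\log_p 2}\sum_{1\le d<r(x)}d\ \le\ \frac{r(x)^2}{\log_p 2}.
\]

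Next I would divide by $\pi(x)$. By the Prime Number Theorem $\pi(x)\sim x/\ln x=\log_p(e)\cdot x/\log_p x$, so
\[
\frac{\#\{t\in B\mid t\le x\}}{\pi(x)}\ \le\ \frac{r(x)^2}{\bigl(1+o(1)\bigr)\,\log_p(2)\,\log_p(e)\cdot x/\log_p x}\ =\ \frac{r(x)^2}{\bigl(1+o(1)\bigr)\,\gamma\cdot x/\log_p x}\,,
\]
and the hypothesis $r(t)\ll\sqrt{\gamma\, t/\log_p t}$ says exactly that $r(x)^2=o\bigl(\gamma\, x/\log_p x\bigr)$, so this ratio tends to $0$. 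Hence $B$ has relative density $0$ (the prime $t=p$, at which $s_p$ is undefined, is a single exception that does not matter), $P(r)$ has relative density $1$ among all primes, and in particular it is infinite and dense. For the final assertion one observes that functions satisfying $\log_p(t)\ll r(t)\ll\sqrt{\gamma\, t/\log_p t}$ exist --- for instance $r(t)=\lceil t^{1/3}\rceil$, which is non-decreasing, has $t^{1/3}/\log_p t\to\infty$ and satisfies $t^{1/3}=o(\sqrt{t/\log_p t})$ --- and that for any such $r$ every $t\in P(r)$ has $s_p(t)\ge r(t)$ with $r(t)/\log_p t\to\infty$, so $s_p(t)/\log_p t\to\infty$ along the infinite, dense set $P(r)$.

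I do not expect a genuine obstacle: the whole argument is the two estimates above. The only point requiring care is the bookkeeping with logarithms to base $p$ --- one must check that the factor $\log_p(e)$ from the Prime Number Theorem and the factor $\log_p(2)$ from $\omega(n)\le\log_2 n$ combine into exactly $\gamma$, so that the little-$o$ in the hypothesis is precisely what drives the ratio to $0$; using instead the sharper bound $\sum_{1\le d<r(x)}d=\tfrac12 r(x)^2+O(r(x))$ makes the hypothesis conservative by a factor $2$, which is harmless.
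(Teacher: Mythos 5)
Your proposal is correct and follows essentially the same route as the paper: count the ``bad'' primes $t\le x$ with $s_p(t)<r(x)$ by noting each divides $p^d-1$ for some $d<r(x)$, bound the number of prime factors of $p^d-1$ by $\log_2(p^d)$ to get an $O(r(x)^2)$ bound, and compare with $\pi(x)$ via the Prime Number Theorem. Your write-up is in fact slightly cleaner on the bookkeeping (using monotonicity of $r$ to replace $r(t)$ by $r(x)$ throughout, and tracking how $\log_p(2)\log_p(e)$ produces $\gamma$), so no changes are needed.
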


\begin{proof}
Let $B_n$ be the set of primes $t$ less than $n$ which are not in
$P(r)$ (i.e., if $\pi(n)$ is the set of primes less than $n$, then
$\pi(n)=B_n\sqcup (P(r)\cap \pi(n))$). Since $s_p(t)$ is the
multiplicative order of $p$ modulo $t$, there exists, for every $t$
in $B_n$, two integers $a \in \bbN$ and $k\in \bbN$ such that
$$0<a<r(t) \text{ and } p^a-1=kt.$$
Thus
$$\#B_n\leq \#\{(a,k) \ | \ 0<a<r(t) \text{ and } (p^a-1)/k \text{ is prime}\}\leq r(t)\cdot\max_{0<a<r(t)}\#\{\text{prime factors of }p^a-1\}$$
$$\leq r(t)\cdot \log_2(p^{r(t)}-1)\leq r(t)^2\cdot \log_2(p)<< \frac{t}{\log t}.$$
By the Prime Number Density Theorem, we have $\pi(n)\sim n/\log n$. Thus the
set $P(r)$ is infinite, even dense in the set of all primes.
\end{proof}

\begin{rmk}\label{rmk:infinitelymany}
Since $\mathcal{P}$ has positive density, there are infinitely many
$q\in \mathcal{P}$ such that $s_p(q)$ grows faster than $\log_p(q)$.
\end{rmk}

\begin{lem}\label{lem:Omega}
If $\Omega_l$ be the set of left ideals in $Q$ of dimension $l$,
then $\#\Omega_l\leq q^{l/s_p(q)+1}$.
\end{lem}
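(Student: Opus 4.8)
The plan is to count left ideals of $Q = \F_pN \cong \bigoplus_{i=0}^s Q_i$ of a fixed dimension $l$ by using the Wedderburn decomposition. Since each $Q_i$ is a field — namely $Q_0 \cong \F_p$ and $Q_i \cong \F_{p^{s_p(q)}}$ for $i \geq 1$ — every left ideal (equivalently, two-sided ideal, since $Q$ is commutative) of $Q$ is of the form $\bigoplus_{i \in S} Q_i$ for some subset $S \subseteq \{0,1,\ldots,s\}$. Thus the left ideals of $Q$ are in bijection with subsets of $\{0,\ldots,s\}$, and a left ideal has dimension $l$ precisely when the corresponding subset $S$ satisfies $\sum_{i\in S}\dim Q_i = l$.

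The first step is therefore to bound the number of such subsets. Writing $a = |S \cap \{1,\ldots,s\}|$ and noting $\dim Q_i = s_p(q)$ for $i \geq 1$, a subset $S$ with $\sum_{i\in S}\dim Q_i = l$ has $a$ equal to either $\lfloor l/s_p(q)\rfloor$ or $\lfloor l/s_p(q)\rfloor - 1$ (the latter only when $0 \in S$, contributing the extra dimension $1$); in any case $a \leq l/s_p(q)$. Hence
$$
\#\Omega_l \;\leq\; 2\binom{s}{\lfloor l/s_p(q)\rfloor} \;\leq\; 2 s^{\,l/s_p(q)}.
$$
The second step is to replace $s$ by something in terms of $q$. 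Since $Q = \F_p[x]/\langle x^q-1\rangle$ has dimension $q$ and decomposes into $s+1$ simple components with $\dim Q_0 = 1$ and $\dim Q_i = s_p(q)$ for $i\geq 1$, we get $q = 1 + s\cdot s_p(q)$, so $s = (q-1)/s_p(q) < q$. Combining, $\#\Omega_l \leq 2 q^{\,l/s_p(q)} \leq q^{\,l/s_p(q)+1}$, where in the last inequality I use $2 \leq q$ (every prime $q$ here satisfies $q-1 \geq p \geq 2$, so $q \geq 3$).

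I do not anticipate a genuine obstacle here; the only point requiring a little care is the bookkeeping in the first step, namely confirming that the presence or absence of the one-dimensional summand $Q_0$ changes the count only by a bounded factor (absorbed into the "$+1$" in the exponent), and that the crude bound $\binom{s}{a} \leq s^a$ together with $s < q$ suffices. One should also note that the statement is trivially true when no left ideal of dimension $l$ exists (then $\#\Omega_l = 0$), so one may assume $\Omega_l \neq \emptyset$ throughout.
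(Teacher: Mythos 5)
Your proof is correct and follows essentially the same route as the paper's: decompose $Q$ into its simple (field) components via Wedderburn, observe that every left ideal is a direct sum of some of the $Q_i$, bound the number of summands by $l/s_p(q)+1$, and finish with $s+1\leq q$. Your counting is in fact slightly sharper (subsets and binomial coefficients rather than the paper's crude $(s+1)^{l/s_p(q)+1}$ bound on tuples), but the idea is identical and the estimate $2s^{l/s_p(q)}\leq q^{l/s_p(q)+1}$ closes the argument just as well.
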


\begin{proof}
Recall that $Q_0,Q_1,\ldots,Q_s$ are the irreducible modules in $Q$
where $\dim_{\mathbb{F}_p}Q_0=1$ and $\dim_{\mathbb{F}_p}Q_i=s_p(q)$
for $i\in \{1,\ldots,s\}$. An ideal of dimension $l$ is a direct sum
of at most $l/s_p(q)+1$ of these irreducible modules. There are at
most $(s+1)^{l/s_p(q)+1}$ such sums and the assertion follows from
$s+1\leq q=s_p(q)\cdot s+1$.
\end{proof}

Let $Q^\ast=\bigoplus_{i=1}^s Q_i$ and let $Q^{\ast \times}$ be the
multiplicative group of units of $Q^\ast$.

\begin{lem}\label{lem:U}
If $f\in Q^\ast$ such that $\dim fQ=l$ and
$$U=Q^{\ast \times}f(1+\alpha+\ldots+\alpha^{p-1})Q^{\ast \times},$$
then $\#U\geq p^{{\frac{2p-1}{p}}l}$.
\end{lem}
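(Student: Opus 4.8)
The plan is to peel the count apart block by block, reduce the problem inside a single block $R_i$ to counting the fibres of a natural surjection, and close with an elementary inequality. For the reduction, write $f=\sum_{i=1}^s f_i$ with $f_i\in Q_i$ and set $S=\{\,i\in\{1,\dots,s\}:f_i\neq 0\,\}$. Since $Q^\ast=\prod_{i=1}^s Q_i$ is a direct product of the fields $Q_i$, we have $Q^{\ast\times}=\prod_{i=1}^s Q_i^\times$ and $fQ=\bigoplus_{i\in S}Q_i$, so $l=\dim fQ=|S|\,s_p(q)$. For $u=(u_i)_i,v=(v_i)_i\in Q^{\ast\times}$ one has $u f(1+\alpha+\dots+\alpha^{p-1})v=\sum_{i\in S}u_i f_i(1+\alpha+\dots+\alpha^{p-1})v_i$, and the $i$-th summand lies in the block $R_i$. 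As $\F_pG=\bigoplus_i R_i$ by Theorem~\ref{structure1}, these summands are chosen independently, whence $\#U=\prod_{i\in S}\#U_i$ with $U_i=Q_i^\times f_i(1+\alpha+\dots+\alpha^{p-1})Q_i^\times$; and since $f_i$ is a unit of the field $Q_i$ we may absorb it, obtaining $U_i=Q_i^\times(1+\alpha+\dots+\alpha^{p-1})Q_i^\times$, which no longer depends on $f$.

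For the count inside a fixed block, put $L:=l_i=s_p(q)$, $e:=1+\alpha+\dots+\alpha^{p-1}$, and recall $R_i\cong\Mat_p(\F_{p^{L/p}})$ and $\hat Q_i=Q_i$. Iterating $\alpha a=\hat a\alpha$ gives, for $x,y\in Q_i$, the expansion $xey=\sum_{j=0}^{p-1}(x\,\hat y^{(j)})\alpha^j$, where $\hat y^{(j)}$ denotes the automorphism $a\mapsto\hat a$ applied $j$ times to $y$. Comparing the coefficients of $\alpha^0$ and $\alpha^1$ --- exactly as in the proof of Lemma~\ref{min-ideals} --- one finds that for $u,v,u',v'\in Q_i^\times$ the equality $uev=u'ev'$ holds iff $x:=u'^{-1}u$ and $y:=vv'^{-1}$ satisfy $y=\hat y$ and $x=y^{-1}$, i.e.\ iff $y\in Z_i^\times$ and $x=y^{-1}$; the converse is immediate since then $\hat y^{(j)}=y$ for all $j$. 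Hence the fibres of the surjection $Q_i^\times\times Q_i^\times\to U_i$, $(u,v)\mapsto uev$, are precisely the free orbits $\{(uz,z^{-1}v):z\in Z_i^\times\}$, each of size $\#Z_i^\times=p^{L/p}-1$ by Lemma~\ref{Z_i}. Therefore $\#U_i=(p^L-1)^2/(p^{L/p}-1)$.

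It then remains to verify $(p^L-1)^2/(p^{L/p}-1)\ge p^{\,2L-L/p}=p^{\frac{2p-1}{p}L}$. With $a:=p^L$ and $b:=p^{L/p}$ this is equivalent to $a^2-2ab+b\ge 0$. Since $p\ge 2$ and $L/p=s_p(q)/p$ is a positive integer (because $q\in\mathcal{P}$, so $p\mid s_p(q)$), we have $b\ge 2$ and $a=p^L\ge p^{2L/p}=b^2\ge 2b$, so $a^2-2ab+b=a(a-2b)+b>0$. Multiplying the block estimates together, $\#U=\prod_{i\in S}\#U_i\ge\bigl(p^{\frac{2p-1}{p}L}\bigr)^{|S|}=p^{\frac{2p-1}{p}\,|S|L}=p^{\frac{2p-1}{p}l}$, as claimed.

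The one place requiring genuine care is the fibre computation in the middle step: identifying the fibres with free $Z_i^\times$-orbits is the same normal-form manipulation as in Lemma~\ref{min-ideals}, and it is precisely here that the standing hypothesis $p\mid s_p(q)$ intervenes --- it forces $L/p\ge 1$, hence $b\ge 2$, which is exactly what makes the final elementary inequality go through. Everything else is bookkeeping with the block decomposition of Theorem~\ref{structure1} and with $Q^{\ast\times}=\prod_i Q_i^\times$.
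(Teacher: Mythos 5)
Your proof is correct and follows essentially the same route as the paper: decompose $f$ over the blocks, reduce to counting $Q_i^\times(1+\alpha+\dots+\alpha^{p-1})Q_i^\times$ in each block (you count fibres of $(u,v)\mapsto uev$ over $Z_i^\times$-orbits, the paper partitions the same set into the minimal ideals $I^i_{[b]}$ of Lemma~\ref{min-ideals} --- both yield $(p^{l_i}-1)^2/(p^{l_i/p}-1)$), and finish with the same elementary inequality $a^2-2ab+b\ge 0$. Your explicit product $\#U=\prod_{i\in S}\#U_i$ is in fact cleaner than the paper's final display, where the intended product over $i\in S$ appears misprinted as a sum.
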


\begin{proof}
We may decompose $f=\sum_{i=1}^s f_i$, with $f_i\in Q_i$ and put
$S:=\{i \ | \ f_i\neq 0\}$. Since $f_iQ_i^\times=Q_i^\times$ for
$i\in S$ (recall that $Q_i$ is isomorphic to a field), we get
$$U=\sum_{i\in S} Q_i^{\times}(1+\alpha+\ldots+\alpha^{p-1})Q_i^{\times}.$$
By Lemma \ref{min-ideals}, we have
$$Q_i^{\times}(1+\alpha+\ldots+\alpha^{p-1})Q_i^{\times}=\bigsqcup_{[b]\in Q_i^\times/Z_i^\times} I^i_{[b]}\setminus \{0\},$$
where $  \# I^i_{[b]} =p^{l_i}$ and
$\#Q_i^\times/Z_i^\times=\#\{\text{irreducible left ideals in } R_i\}=\frac{p^{l_i}-1}{p^{l_i/p}-1}.$
It follows
$$\#(Q_i^{\times}(1+\alpha+\ldots+\alpha^{p-1})Q_i^{\times})=\frac{p^{l_i}-1}{p^{l_i/p}-1}\cdot (p^{l_i}-1)\geq  p^{(p-1)l_i/p}\cdot p^{l_i}.$$
Finally,
$$\#U\leq \sum_{i\in S}p^{(p-1)l_i/p}\cdot p^{l_i}= p^{{\frac{2p-1}{p}}l},$$
since $l=\sum_{i\in S}l_i$.
\end{proof}

In order to prove Theorem \ref{T2} we need the following result
which is a special case of (\cite{FL}, Theorem 3.3). Let us recall that a group code is a balanced code, as observed in \cite[Lemma 2.2.]{BM}.

\begin{lem} \label{Fan-Liu} Let $C$ be a $[n,k]_p$ group code.  Then
$$A_w(C) := \#\{c\mid c \in C, \, {\rm wt}(c) = w \}\leq p^{k\cdot h_p(w/n)}$$
for all $0\leq w\leq \frac{p-1}{p}\cdot n$, where
$$h_p(x):=-(1-x)\log_p(1-x)-x\log_p\left(\frac{x}{p-1}\right)$$ is the
$p$-ary entropy function.
\end{lem}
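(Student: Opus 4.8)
The plan is to prove the stated inequality by a short information-theoretic argument, using only the feature of a group code that matters here: a large transitive symmetry group. Write $n=|G|$ and realize $C$, say, as a right ideal of $\F_p G$, so that the coordinates are indexed by $G$ and right multiplication by the elements of $G$ permutes the coordinates \emph{regularly} while preserving $C$; this transitivity is exactly the balancedness recorded in \cite[Lemma 2.2]{BM}. We may assume $B_w:=\{c\in C\mid \weight(c)=w\}\neq\emptyset$, since otherwise $A_w(C)=0$ and there is nothing to prove. Let $c$ be uniformly distributed on $B_w$ and let $H(\cdot)$ denote base-$p$ Shannon entropy, so that $\log_p A_w(C)=\log_p\#B_w=H(c)$; the target is $H(c)\le k\,h_p(w/n)$.

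First I would replace the codeword by a symmetrically chosen information set. Fix an information set $I_0\subseteq\{1,\dots,n\}$ of $C$, so $|I_0|=k$ and $c\mapsto c|_{I_0}$ is injective on $C$, and let $I:=gI_0$ for $g$ uniform in $G$, chosen independently of $c$; each $I$ is again an information set because $g$ is a code automorphism. Since $c$ is a function of $c|_I$, and $c$ and $g$ are independent, the chain rule together with subadditivity of entropy gives
\[
H(c)=H(c\mid g)=\mathbb{E}_{g}\bigl[H(c|_{gI_0})\bigr]\le \mathbb{E}_{g}\Bigl[\sum_{i\in gI_0}H(c_i)\Bigr]=\sum_{i=1}^{n}\Pr[\,i\in I\,]\cdot H(c_i).
\]
This is the step where the group structure is used: since $G$ acts regularly on the coordinates, for each $i$ and each $j\in I_0$ there is exactly one $g\in G$ with $gj=i$, whence $\Pr[\,i\in I\,]=\Pr_g[\,g^{-1}i\in I_0\,]=k/n$ for every coordinate $i$. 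Therefore $H(c)\le \tfrac{k}{n}\sum_{i=1}^{n}H(c_i)$.

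It remains to estimate the single-coordinate entropies. Set $q_i:=\Pr[c_i\neq 0]$. Conditioning on whether $c_i$ vanishes and using that $c_i$ takes at most $p-1$ nonzero values,
\[
H(c_i)\le-(1-q_i)\log_p(1-q_i)-q_i\log_p q_i+q_i\log_p(p-1)=h_p(q_i).
\]
Moreover $\sum_{i=1}^{n}q_i=\mathbb{E}[\weight(c)]=w$, because every element of $B_w$ has weight $w$. Since $h_p$ is concave on $[0,1]$ (one checks $h_p''<0$), Jensen's inequality yields $\tfrac1n\sum_{i=1}^{n}h_p(q_i)\le h_p\bigl(\tfrac1n\sum_i q_i\bigr)=h_p(w/n)$, and combining the last three displays gives $H(c)\le k\,h_p(w/n)$, i.e.\ $A_w(C)\le p^{\,k\,h_p(w/n)}$. (The restriction $w\le\tfrac{p-1}{p}n$ is not actually needed for this argument; it is just the range in which the bound is applied afterwards.)

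The only genuinely delicate point is the averaging over the $G$-orbit of $I_0$, and it is exactly what separates a group code from an arbitrary code with no identically-zero coordinate: subadditivity applied to a single information set would give only $H(c)\le\sum_{i\in I_0}H(c_i)$, hence the much weaker $A_w(C)\le p^{\,k\,h_p(w/k)}$, which can genuinely fail for low-rate codes. Averaging trades the arbitrary, possibly ``heavy'', coordinates of $I_0$ for the uniform weights $k/n$; equivalently, transitivity of the automorphism action forces the constant vector $\tfrac{k}{n}(1,\dots,1)$ into the matroid base polytope of $C$, which is precisely what the entropy bound consumes. I expect this step, and a clean formulation of it, to be the main obstacle; everything else is routine.
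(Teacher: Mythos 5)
Your argument is correct. Note that the paper itself does not prove this lemma: it only cites it as a special case of Fan--Liu \cite{FL}, Theorem 3.3, remarking that a group code is balanced in the sense of \cite{BM}. What you have written is, in substance, the standard entropy proof of the weight-distribution bound for balanced codes (it is essentially the argument behind Lemma 2.3 of \cite{BM} and the quasi-abelian version in \cite{FL}), so it supplies a self-contained proof where the paper only gives a reference. All the steps check out: the regular action of $G$ on the coordinates of a one-sided ideal makes every translate $gI_0$ of an information set again an information set, the averaging gives $\Pr[i\in gI_0]=k/n$ for every coordinate, the single-coordinate bound $H(c_i)\le h_p(q_i)$ is the usual two-stage (zero/nonzero, then which nonzero value) decomposition, $\sum_i q_i=w$ because the conditioning is on exact weight $w$, and $h_p$ is indeed concave so Jensen applies. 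Your closing observation that the hypothesis $w\le\frac{p-1}{p}n$ is not needed for the bound itself (only for its later use, where monotonicity of $h_p$ matters) is also accurate. The one place worth tightening in a write-up is the identity $H(c)=\mathbb{E}_g[H(c|_{gI_0})]$: it is cleaner to note that for each \emph{fixed} $g$ one has the deterministic inequality $H(c)=H(c|_{gI_0})\le\sum_{i\in gI_0}H(c_i)$ and then average over $g$, which avoids any appeal to conditional entropy of independent variables; but this is cosmetic, not a gap.
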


\begin{thm} \label{T2}
Let $R:=\F_pG$ and consider the unique decomposition
$R=\bigoplus_{i=0}^sR_i$ into the $p$-blocks $R_i$ described in
Theorem {\rm \ref{structure1}}.

Now we choose a left ideal $I$ of $R$ as
$$I=\bigoplus_{i=1}^sI_i$$
where each $I_i$ is taken uniformly at random among the $1+
p^{l_i/p}+ \ldots + p^{(p-1)l_i/p}$ non-zero irreducible left ideals
of $R_i$.\\

If $0<\delta\leq \frac{p-1}{p}$ satisfies $h_p(\delta)\leq \frac{p-1}{p^2}-\frac{\log_p(q)}{p\cdot s_p(q)}$,
then the probability that the minimum relative distance of $I$ is below $\delta$ is at most
 $$p^{-p\cdot s_p(q)\cdot \left(\frac{p-1}{p^2}-h_p(\delta)\right)+(2p+1)\log_p(q) }. $$ \\
\end{thm}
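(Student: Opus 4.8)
The strategy is the standard probabilistic (Gilbert--Varshamov-type) argument: fix a target relative distance $\delta$ and bound the probability that the random ideal $I$ contains a nonzero codeword of weight at most $\delta n$, where $n=\dim_{\F_p}\F_pG = pq$. First I would set up notation: write $w_0 := \lfloor \delta n\rfloor$, and for each nonzero $c\in R^\ast := \bigoplus_{i=1}^s R_i$ of weight $\le w_0$, estimate the probability that $c\in I$. Since $I=\bigoplus_{i=1}^s I_i$ with the $I_i$ chosen independently and uniformly among the $1+p^{l_i/p}+\cdots+p^{(p-1)l_i/p}$ nonzero irreducible left ideals of $R_i$, a nonzero element $c = \sum_{i=1}^s c_i$ (with $c_i\in R_i$) lies in $I$ iff $c_i\in I_i$ for every $i$ in the support $S=\{i\mid c_i\neq 0\}$; for each such $i$ the element $c_i$ lies in at most one of the minimal left ideals of $R_i$ (distinct minimal left ideals of the matrix algebra $R_i$ meet only in $0$), so $\Pr[c_i\in I_i]\le 1/(1+p^{l_i/p}+\cdots+p^{(p-1)l_i/p}) \le p^{-(p-1)l_i/p}$. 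Multiplying over $i\in S$ gives $\Pr[c\in I]\le p^{-(p-1)(\sum_{i\in S}l_i)/p}$.

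The key observation, which is where Lemma~\ref{lem:U} enters, is that $\sum_{i\in S}l_i$ is not small: if $c_i\neq 0$ then $c_i = f_i(1+\alpha+\cdots+\alpha^{p-1})b_i$ for suitable units, and the nonzero elements of $R^\ast$ realized by the random ideals all lie in the set $U$ from Lemma~\ref{lem:U} (taking $f=\sum_{i\in S} f_i$, so that $\dim fQ = \sum_{i\in S}l_i =: l$). Actually, more simply: any nonzero $c\in\bigoplus_i I_i$ has the property that its support set $S$ determines $l=\sum_{i\in S}l_i$, and since each $l_i = s_p(q)$, we have $l = |S|\cdot s_p(q)$. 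So the bound becomes $\Pr[c\in I]\le p^{-(p-1)l/p}$. Next I would run a union bound over all nonzero $c\in R^\ast$ with $\mathrm{wt}(c)\le w_0$. I split this sum according to the value of $l$: for fixed $l$, the number of $c\in R^\ast$ of weight $w$ lying in a fixed $l$-dimensional left ideal of $Q$ is at most $p^{l\cdot h_p(w/n)}$ by Lemma~\ref{Fan-Liu} (a group code of dimension $\le pl/s_p(q)\cdot$\dots — here one must be slightly careful: the relevant ambient object is an ideal of $\F_pG$ of dimension $pl$, containing $c$), and by Lemma~\ref{lem:Omega} the number of such ideals of $Q$ is at most $q^{l/s_p(q)+1}$. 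Summing $w$ from $1$ to $w_0$ contributes at most a factor $w_0+1\le n = pq$.

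Putting the pieces together, the probability that $I$ has minimum relative distance $<\delta$ is at most
\[
\sum_{l}\; q^{l/s_p(q)+1}\cdot (pq)\cdot p^{l\cdot h_p(\delta)}\cdot p^{-(p-1)l/p},
\]
where $l$ ranges over multiples of $s_p(q)$ from $s_p(q)$ up to $q\cdot s_p(q)\cdot$(something bounded). I would bound the geometric-type sum by its largest term times the number of terms (at most $s+1\le q$), absorbing polynomial-in-$q$ factors into $p^{c\log_p q}$. The exponent of $p$ in the dominant term is roughly $l\big(h_p(\delta)-\tfrac{p-1}{p}\big) + \tfrac{l}{s_p(q)}\log_p q + O(\log_p q)$; since $l\ge s_p(q)$ and, under the hypothesis $h_p(\delta)\le \tfrac{p-1}{p^2}-\tfrac{\log_p q}{p\,s_p(q)}$, the coefficient $h_p(\delta)-\tfrac{p-1}{p} + \tfrac{1}{s_p(q)}\log_p q$ is negative and bounded above by $-\tfrac{p-1}{p^2}$ — wait, I should be careful: actually one wants the $l=s_p(q)$ term (smallest $l$, hence least negative exponent) to dominate, so one needs monotonicity of the exponent in $l$, which holds precisely when $h_p(\delta)+\tfrac{1}{s_p(q)}\log_p q - \tfrac{p-1}{p} \le 0$; this is implied by the hypothesis. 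Then the dominant exponent is $s_p(q)\big(h_p(\delta)-\tfrac{p-1}{p}\big)+\log_p q$, and collecting the remaining $O(\log_p q)$ contributions (from $q^{1}$, from $pq$, from the number of terms $\le q$, and from the slack in the geometric sum) yields an exponent of $p$ at most
\[
-p\cdot s_p(q)\Big(\tfrac{p-1}{p^2}-h_p(\delta)\Big) + (2p+1)\log_p q,
\]
after rewriting $s_p(q)\big(\tfrac{p-1}{p}-h_p(\delta)\big)\ge p\,s_p(q)\big(\tfrac{p-1}{p^2}-h_p(\delta)\big)$ using $\tfrac{p-1}{p}\ge p\cdot\tfrac{p-1}{p^2}$ with equality — so in fact one needs to track constants here rather than inequalities. \emph{The main obstacle} is exactly this bookkeeping: getting the constant $\tfrac{p-1}{p^2}$ (rather than $\tfrac{p-1}{p}$) and the clean $(2p+1)\log_p q$ error term requires carefully choosing how to split $l$, deciding which term dominates, and using the hypothesis on $h_p(\delta)$ to guarantee monotonicity — the probabilistic heart (first-moment bound, combined with Lemmas~\ref{lem:Omega}, \ref{lem:U}, \ref{Fan-Liu}) is routine, but matching the stated exponents exactly is the delicate part.
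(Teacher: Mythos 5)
Your overall strategy is the same as the paper's: a first-moment union bound over low-weight candidate codewords, with Lemma \ref{lem:Omega} counting the possible supports, Lemma \ref{Fan-Liu} counting low-weight words, and the classification of minimal left ideals (Lemma \ref{min-ideals}) driving the probability estimate. Your probability step is organized differently: you bound $\Pr[c\in I]$ for a fixed candidate $c$ using independence across blocks and the fact that distinct minimal left ideals of the simple algebra $R_i$ intersect trivially, whereas the paper introduces a second random unit $a$ so that $af(1+\alpha+\cdots+\alpha^{p-1})b$ is uniform on the set $U$ of Lemma \ref{lem:U} and then counts low-weight elements of $U$. Your version is valid and arguably cleaner (it is the same double count read in transposed order, and your polynomial prefactor $pq$ is even slightly better than the paper's $(pq\delta)^p$), so the reorganization is not where the problem lies.

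The gap is in the step you yourself flag as the obstacle, and it is a concrete one rather than mere bookkeeping. The candidate codewords whose block-support has total $Q$-dimension $l$ live in the ambient ideal $\bigoplus_{i\in S}R_i$ of $\F_pG$, which has dimension $pl$; Lemma \ref{Fan-Liu} therefore bounds the number of such words of weight $w$ by $p^{pl\cdot h_p(w/pq)}$, not by the $p^{l\cdot h_p(\delta)}$ appearing in your displayed sum. (The paper reaches the same exponent by writing $r=r_0+\alpha r_1+\cdots+\alpha^{p-1}r_{p-1}$ with each $r_j$ in the $l$-dimensional code $fQ$, applying the bound componentwise and invoking concavity of $h_p$.) This missing factor of $p$ is exactly why you cannot see where $\tfrac{p-1}{p^2}$ comes from and why your monotonicity condition comes out as $h_p(\delta)+\tfrac{\log_p q}{s_p(q)}\le\tfrac{p-1}{p}$ instead of the theorem's hypothesis. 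With the correct count, the exponent of $p$ in the $l$-th term is
$$l\Bigl(p\,h_p(\delta)-\tfrac{p-1}{p}+\tfrac{\log_p q}{s_p(q)}\Bigr)+O(\log_p q)\;=\;l\,p\Bigl(h_p(\delta)-\tfrac{p-1}{p^2}+\tfrac{\log_p q}{p\, s_p(q)}\Bigr)+O(\log_p q),$$
whose coefficient of $l$ is $\le 0$ precisely under the stated hypothesis; the term $l=s_p(q)$ then dominates and contributes $-p\,s_p(q)\bigl(\tfrac{p-1}{p^2}-h_p(\delta)\bigr)+\log_p q$, while the remaining factors (the extra $q$ from Lemma \ref{lem:Omega}, the $pq$ from the sum over weights, and the at most $q$ terms of the sum over $l$) total at most $4\log_p q+1\le(2p+1)\log_p q$ for $q>p$. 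No inequality of the form $\tfrac{p-1}{p}\ge p\cdot\tfrac{p-1}{p^2}$ is needed; once the factor $p$ from the weight enumerator is tracked, the constant $\tfrac{p-1}{p^2}$ appears on its own.
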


\begin{proof}
Since every irreducible left ideal $I_i$ is of the form given in Lemma \ref{min-ideals},
 the above randomized construction is equivalent to consider
$$I_{[b]}=Q(1+\alpha+\ldots+\alpha^{p-1})b=Q^\ast(1+\alpha+\ldots+\alpha^{p-1})b$$
where $[b]$ is selected uniformly at random from $Q^{\ast \times}/Z$
with $Z:=\{a \in Q^{\ast \times} \mid \hat{a}= a\}$. Since $Q^{\ast
\times}$ is a group, we have $Q^{\ast \times}=aQ^{\ast \times}$ for
all $a\in Q^{\ast \times}$, hence
$$I_{[b]}=aQ^\ast(1+\alpha+\ldots+\alpha^{p-1})b$$
for  all $a\in Q^{\ast \times}$. Let
$$P={\rm Pr}({\rm d}(I_{[b]})\leq pq\delta)=\frac{\#\{I_{[b]} \ | \ {\rm d}(I_{[b]})\leq pq\delta\}}{\#(Q^{\ast \times}/T)}=\frac{\#\{(a,b) \ | \ {\rm d}(aQ^\ast(1+\alpha+\ldots+\alpha^{p-1})b)\leq pq\delta\}}{\#(Q^{\ast \times})^2}.$$
By definition of the minimum distance, we have that
$$P\leq \sum_{f\in Q^\ast,f\neq 0}{\rm Pr}_{(a,b)\in (Q^{\ast \times})^2}(0\leq {\rm wt}(af(1+\alpha+\ldots+\alpha^{p-1})b)<pq\delta).$$
We can partition $Q$ as
$$Q=\bigsqcup_{l=s_p(q)}^q \underbrace{\{f\in Q \ | \ \dim_{\mathbb{F}_p} fQ=l\}}_{=D_l} \ \text{ and } \ Q^\ast=\bigsqcup_{l=s_p(q)}^q \underbrace{D_l\cap Q^\ast}_{=D_l^\ast},$$
so that
$$P\leq \sum_{l=s_p(q)}^q \#(D^\ast_l)\max_{f\in D^\ast_l} {\rm Pr}_{(a,b)\in (Q^{\ast \times})^2}(0\leq {\rm wt}(af(1+\alpha+\ldots+\alpha^{p-1})b)<pq\delta).$$
Let $\Omega_l$ be the set of left ideals in $Q$ of dimension $l$.
Then
$$\#(D^\ast_l)\leq \#(D_l)\leq p^l\cdot \#(\Omega_l)\leq p^l\cdot q^{l/s_p(q)+1}$$
by Lemma \ref{lem:Omega}. For any $l$ and any $f\in D_l^\ast$, we
can define
$$U=Q^{\ast \times}f(1+\alpha+\ldots+\alpha^{p-1})Q^{\ast \times}$$
as in Lemma \ref{lem:U}.  Using this we get
$${\rm Pr}_{(a,b)\in (Q^{\ast \times})^2}(0\leq {\rm wt}(af(1+\alpha+\ldots+\alpha^{p-1})b)<pq\delta)=$$ $$=\sum_{r\in U, 0\leq {\rm wt}(r)<pq\delta} {\rm Pr}_{(a,b)\in (Q^{\ast \times})^2}(af(1+\alpha+\ldots+\alpha^{p-1})b=r)\leq$$
$$\leq \max_{r\in U} {\rm Pr}_{(a,b)\in (Q^{\ast \times})^2}(af(1+\alpha+\ldots+\alpha^{p-1})b=r) \ \cdot$$ $$\cdot \ \sum_{w_1,\ldots,w_p\geq 0, w_1+\ldots+w_p<pq\delta}\#(fQ^{(w_1)})\cdot\ldots \cdot \#(fQ^{(w_p)}),$$
where $fQ^{(w)}$ is the set of elements of weight $w$ in $fQ$.\\
It is easy to see that each $r\in U$ can occur with the same
probability as $af(1+\alpha+\ldots+\alpha^{p-1})b$, so that the
above probability is independent of $r$. Thus we have
$${\rm Pr}_{(a,b)\in (Q^{\ast \times})^2}(af(1+\alpha+\ldots+\alpha^{p-1})b=r)=\frac{1}{\#U}\leq  p^{-{\frac{2p-1}{p}}l},$$
by Lemma \ref{lem:U}.\\
Moreover, $fQ$ is a $[pq,l]_p$ group code,
so that, by Lemma \ref{Fan-Liu}, we have
$$\#(fQ^{(w)})\leq p^{l\cdot h_p(w/pq)}$$
for all $w\leq (p-1)\cdot q$ (which is true, since $\delta\leq
\frac{p-1}{p}$). Putting together all previous inequalities we have
$$P\leq \sum_{l=s_p(q)}^q p^{-\frac{p-1}{p}l}\cdot q^{l/s_p(q)+1}\cdot  \sum_{w_1,\ldots,w_p\geq 0, w_1+
\ldots+w_p<pq\delta}p^{l\cdot \sum_{i=1}^ph_p(w_i/pq)},$$
so that, by the convexity,
$$P\leq \sum_{l=s_p(q)}^q p^{-\frac{p-1}{p}l}\cdot q^{l/s_p(q)+1}\cdot (pq\delta)^p\cdot p^{l\cdot p\cdot h_p(\delta)}\leq
\sum_{l=s_p(q)}^q p^{l\cdot p\cdot \left(h_p(\delta)-\frac{p-1}{p^2}+\frac{\log_p(q)}{p\cdot s_p(q)}\right)+p+p\log_p(q)}.$$
Finally, if $h_p(\delta)\leq \frac{p-1}{p^2}-\frac{\log_p(q)}{p\cdot
s_p(q)}$, then
$$P\leq p^{-p\cdot s_p(q)\cdot \left(\frac{p-1}{p^2}-h_p(\delta)\right)+(p+1)\log_p(q)+p}
\leq p^{-p\cdot s_p(q)\cdot \left(\frac{p-1}{p^2}-h_p(\delta)\right)+(2p+1)\log_p(q)}.
$$
\end{proof}

\begin{cor} \label{cor} Group codes over finite fields are asymptotically good.
\end{cor}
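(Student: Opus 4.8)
The plan is to deduce the corollary from Theorem \ref{T2} together with Remark \ref{rmk:infinitelymany}, by exhibiting an explicit infinite family of group codes whose length grows, whose rate stays close to $1/p$, and whose relative minimum distance stays bounded away from $0$.

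First I would fix a constant $\delta$ with $0<\delta\leq\frac{p-1}{p}$ small enough that $h_p(\delta)<\frac{p-1}{p^2}$; this is possible since $h_p$ is continuous with $h_p(0)=0$. Set $\varepsilon:=\frac{p-1}{p^2}-h_p(\delta)>0$. By Remark \ref{rmk:infinitelymany} there is an infinite set $\mathcal{S}\subseteq\mathcal{P}$ of primes $q$ along which $s_p(q)/\log_p(q)\to\infty$; in particular $\frac{\log_p(q)}{p\,s_p(q)}\to 0$ along $\mathcal{S}$, so for all sufficiently large $q\in\mathcal{S}$ we have $h_p(\delta)\leq\frac{p-1}{p^2}-\frac{\log_p(q)}{p\,s_p(q)}$, which is exactly the hypothesis of Theorem \ref{T2}.

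For each such $q$, put $G:=G_{p,q,p^{s_p(q)/p}}$ and let $I=\bigoplus_{i=1}^s I_i$ be the random left ideal of $\F_pG$ as in Theorem \ref{T2}. That theorem bounds the probability that the relative minimum distance of $I$ is below $\delta$ by $p^{-p\,s_p(q)\,\varepsilon+(2p+1)\log_p(q)}$. Since $s_p(q)/\log_p(q)\to\infty$ along $\mathcal{S}$, the exponent tends to $-\infty$, so for $q\in\mathcal{S}$ large enough this probability is $<1$; hence there exists a concrete left ideal $I$ of $\F_pG$ of relative minimum distance at least $\delta$. Each $I_i$ is a minimal left ideal of $R_i$, hence has $\F_p$-dimension $l_i=s_p(q)$ by Theorem \ref{structure1}, so $\dim_{\F_p}I=\sum_{i=1}^s l_i=s\,s_p(q)=q-1$, while the length of $I$ as a code equals $\dim_{\F_p}\F_pG=pq$. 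Thus $I$ is a $[pq,\,q-1]_p$ group code of relative minimum distance $\geq\delta$, with rate $\frac{q-1}{pq}\to\frac1p>0$; as $q$ ranges over the infinite set $\mathcal{S}$ the length $pq\to\infty$, so these codes form an asymptotically good family of group codes over $\F_p$.

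It remains to pass from the prime field to an arbitrary finite field $\F$ of characteristic $p$. If $I$ is a left ideal of $\F_pG$, then $I\otimes_{\F_p}\F$ is a left ideal of $\F G\cong(\F_pG)\otimes_{\F_p}\F$ of the same length $pq$, of $\F$-dimension equal to $\dim_{\F_p}I$, and of the same minimum distance; applying this to the family above gives an asymptotically good family of group codes over $\F$. The only genuine step is the limiting argument of the third paragraph, namely reading off that the failure probability beats $1$ once $s_p(q)$ outgrows $\log_p(q)$; the dimension count and the scalar extension are routine, so no real obstacle remains once Theorem \ref{T2} is available.
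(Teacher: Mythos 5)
Your proposal is correct and follows essentially the same route as the paper: reduce to the prime field by scalar extension, pick $\delta$ so that the hypothesis of Theorem \ref{T2} holds along an infinite subfamily of $q\in\mathcal{P}$ with $s_p(q)/\log_p(q)\to\infty$ (Remark \ref{rmk:infinitelymany}), observe that the failure probability drops below $1$ so a good ideal exists, and compute $\dim I=q-1$ against length $pq$ to get rate tending to $1/p$. If anything, you are slightly more explicit than the paper in checking that the exponent in the probability bound tends to $-\infty$, which is the step the paper leaves implicit.
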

\begin{proof} We have to prove the assertion only for prime fields. The general case then  follows by field extension
(see (\cite{FW}, Proposition 12)). According to Lemma
\ref{lem:asymp} and Remark \ref{rmk:infinitelymany}, we may choose a
sequence of primes $q_i$ in ${\mathcal P}$ such that $q_1< q_2 <
\ldots $ and $\frac{s_{p}(q_i)}{\log_p(q_i)} \longrightarrow \infty$ for
$ i \longrightarrow \infty$. Let $0<\delta\leq \frac{p-1}{p}$ with
$h_p(\delta)\leq \frac{p-1}{p^2}-\frac{\log_p(q_1)}{p\cdot
s_p(q_1)}$. Thus the assumption in Theorem \ref{T2} is satisfied for
all $q_i$ and we can find a left ideal $I_{q_i}$ in
$\F_pG_{p,q_i,p^{s_p(q_i)/p}}$ with relative minimum distance at
least $\delta$. Furthermore, $\dim I_{q_i} = s \cdot s_p(q_i) =
q_i-1$. Thus
$$\frac{\dim I_{q_i}}{pq_i} = \frac{1}{p} - \frac{1}{pq_i} \geq \frac{1}{p} - \frac{1}{pq_1}.$$
This shows that the sequence of the left ideals $I_{q_i}$ is asymptotically good.
\end{proof}

\begin{rmk} {\rm  Note that the groups $G_{p,q,m}$ are $p$-nilpotent with cyclic Sylow $p$-subgroups. Thus the asymptotically good sequence
we constructed in Corollary \ref{cor} is a sequence of group codes
in code-checkable group algebras \cite{BCW}.  In such algebras all
left and right ideals are principal.}
\end{rmk}

\noindent {\bf Acknowledgement.} The first author was partially
supported by PEPS - Jeunes Chercheur-e-s - 2018. We are very
grateful to Pieter Moree who brought to our
attention his paper \cite{Moree}. Thanks also goes to anonymous referees who pointed out some inconsistencies in an earlier version.

\end{document}